\newtheorem{thm}{Theorem}[section]
\newtheorem{prop}[thm]{Proposition}
\theoremstyle{definition}
\numberwithin{equation}{section}
\newcommand{\dom}{\mathrm{dom}}
\newcommand{\cl}{\mathcal}
\newcommand{\C}{\mathbb{C}}
\DeclareMathOperator*{\esssup}{ess\,sup}
\newcommand\gotX{{\mathfrak{X}}}
\newcommand\dR{{\mathbb{R}}}
\newcommand\dC{{\mathbb{C}}}
\newcommand\dN{{\mathbb{N}}}
\newcommand{\ga}{{\alpha}}
\newcommand{\gb}{{\beta}}
\newcommand{\gd}{{\delta}}
\newcommand{\gD}{{\Delta}}
\newcommand{\gga}{{\gamma}}
\newcommand{\go}{{\omega}}
\newcommand\gt{{\tau}}
\newcommand{\gT}{{\Theta}}
\newcommand\cA{{\mathcal{A}}}
\newcommand\cB{{\mathcal{B}}}
\newcommand\cC{{\mathcal{C}}}
\newcommand\cG{{\mathcal{G}}}
\newcommand\cI{{\mathcal{I}}}
\newcommand\cK{{\mathcal{K}}}
\newcommand\cO{{\mathcal{O}}}
\newcommand\cU{{\mathcal{U}}}
\newcommand{\ba}{\begin{array}}
\newcommand{\ea}{\end{array}}
\newcommand{\bea}{\begin{eqnarray}}
\newcommand{\eea}{\end{eqnarray}}
\newcommand{\bead}{\begin{eqnarray*}}
\newcommand{\eead}{\end{eqnarray*}}
\newcommand{\be}{\begin{equation}}
\newcommand{\ee}{\end{equation}}
\newcommand{\bed}{\begin{displaymath}}
\newcommand{\eed}{\end{displaymath}}
\newcommand{\bs}{\begin{split}}
\newcommand{\ees}{\end{split}}
\def\XXint#1#2#3{{\setbox0=\hbox{$#1{#2#3}{\int}$ }
\vcenter{\hbox{$#2#3$ }}\kern-.6\wd0}}
\title{Remarks on the operator-norm
convergence of the Trotter product formula}
\author{Hagen \textsc{Neidhardt}\footnote{H. Neidhardt: WIAS Berlin, Mohrenstr. 39, D-10117 Berlin, Germany;
email: hagen.neidhardt@wias-berlin.de}, 
Artur \textsc{Stephan}\footnote{A. Stephan: HU Berlin, Institut f\"ur Mathematik, Unter den Linden 6, D-10099 Berlin, Germany;
email: stephan@math.hu-berlin.de}, and Valentin A. \textsc{Zagrebnov}\footnote{V.A.Zagrebnov: Universit\'{e} d'Aix-Marseille - Institut de Math\'{e}matiques de Marseille  (UMR 7373), CMI - Technop\^{o}le Ch\^{a}teau-Gombert, 39, rue F. Joliot Curie, 13453 Marseille, France, email: valentin.zagrebnov@univ-amu.fr}}
\begin{document}
\maketitle

\begin{abstract}

We revise the operator-norm convergence of the Trotter product formula for a pair
$\{A,B\}$ of generators of semigroups on a Banach space.
Operator-norm convergence holds true if the dominating operator $A$ generates a
holomorphic contraction semigroup and $B$ is a $A$-infinitesimally small generator of a contraction semigroup, in particular, if $B$ is a bounded operator.
Inspired by studies of evolution semigroups
it is shown in the present paper that the operator-norm convergence
generally fails even for bounded operators $B$ if $A$ is not a
holomorphic generator. Moreover, 
it is shown that operator norm convergence of the Trotter product formula can be arbitrary slow.
\end{abstract}
\bigskip\bigskip\bigskip
\thanks{\noindent
Keywords: Semigroups, bounded perturbations, Trotter product formula, Darboux-Riemann sums,
operator-norm convergence.}


\section{Introduction and main results}\label{Intr}
Recall that the product formula
\begin{equation*}
e^{-\gt C} = \lim_{n\to\infty}\left(e^{-\gt A/n}e^{-\gt B/n}\right)^n, \quad \gt \ge 0,
\end{equation*}
was established by S.~Lie (in 1875) for matrices where $C := A + B$. The proof is based on the
telescopic representation
\be\label{eq:1.2}
\big(e^{-\gt A/n}e^{-\gt B/n}\big)^n - e^{-\gt C} 
= \sum^{n-1}_{k=0}\left(e^{-\gt A/n}e^{-\gt B/n}\right)^{n-1-k}\left(e^{-\gt A/n}e^{-\gt B/n} -
e^{-\gt C/n}\right)e^{-k\gt C/n} \,,
\ee
$n \in \dN$, and expansion
\begin{equation*}
e^{-\gt X}  = I - \gt X + O(\gt^2), \qquad \gt \longrightarrow 0,
\end{equation*}
for a matrix $X$ in the operator-norm topology $\|\cdot\|$. Indeed, using this expansion one
obtains the estimate:
\begin{equation*}
\|e^{-\gt A/n}e^{-\gt B/n} - e^{-\gt C/n}\| = O(({\gt}/{n})^2).
\end{equation*}
{{Then from \eqref{eq:1.2} we get the existence of a constant $c_0 > 0$ such that the
following estimate holds}}
\begin{equation*}
\big\|\big(e^{-\gt A/n}e^{-\gt B/n}\big)^n - e^{-\gt C}\|
\le c_0 \frac{\gt^2}{n^2}\sum^{n-1}_{k=0}e^{\gt\tfrac{n-1-k}{n}\gt \|A\|}e^{\gt\tfrac{n-1-k}{n}\gt \|B\|}
e^{\gt\tfrac{k}{n}\|C\|} \ .
\end{equation*}
Since $\|C\| \le \|A\| + \|B\|$, one obtains inequality
\begin{equation*}
\big\|\big(e^{-\gt A/n}e^{-\gt B/n}\big)^n - e^{-\gt C}\| 
\le c_0 \frac{\gt^2}{n^2}\sum^{n-1}_{k=0}e^{\gt\tfrac{n-1}{n} (\|A\| + \|B\|)} \le c_0 \frac{\gt^2}{n}
e^{\gt(\|A\| + \|B\|)} \ ,
\end{equation*}
which yields that
\be\label{eq:1.7}
\sup_{\gt \in [0,T]}\big\|\big(e^{-\gt A/n}e^{-\gt B/n}\big)^n - e^{-\gt C}\| = O({1}/{n}) \ ,
\ee
as $n \to \infty$ for any $T > 0$. Note that this proof carries through verbatim for bounded operators
$A$ and $B$ on Banach spaces.

H.~Trotter \cite{Trotter1959} has extended this result to unbounded operators $A$ and $B$ on Banach spaces,
but in the strong operator topology. He proved that if $A$ and $B$ are generators of contractions semigroups
on a separable Banach space such that the algebraic sum $A+B$ is a densely defined closable operator and the closure
$C = \overline{A+B}$ is a generator of a contraction semigroup, then
\begin{align}\label{eq:1.8}
e^{-\gt C} = \underset{n\rightarrow\infty}{\mathrm{s- lim}}~\big(e^{-\gt A/n}e^{-\gt B/n}\big)^n \ ,
\end{align}
uniformly in $\gt \in [0,T]$ for any $T > 0$.
It is obvious that this result holds if $B$ is a bounded operator.

Considering the Trotter product formula on a Hilbert space T.~Kato has shown in \cite{Kato1978} that for
non-negative operators $A$ and $B$ the Trotter formula \eqref{eq:1.8} holds in the \textit{strong} operator
topology if $\dom(\sqrt{A}) \cap \dom(\sqrt{B})$ is dense in the Hilbert space and $C = A \dot+ B$
is the form-sum of operators $A$ and $B$. Later on it was shown in \cite{ITTZ2001} that the relation
\eqref{eq:1.7} holds if the algebraic sum $C = A+B$ is already a self-adjoint operator. Therefore,
\eqref{eq:1.7} is valid in particular, if $B$ is a bounded self-adjoint operator.

The historically first result concerning the operator-norm convergence of the Trotter formula in
a Banach space is due to \cite{CachZag2001}. Since the concept of self-adjointness is missing for Banach
spaces it was assumed that the \textit{dominating} operator $A$ is a generator of a \textit{contraction holomorphic}
semigroup and $B$ is a generator of a contraction semigroup. In Theorem 3.6 of \cite{CachZag2001} it was
shown that if $0 \in \rho(A)$ and if there is a $\ga \in [0,1)$ such that $\dom(A^\ga) \subseteq \dom(B)$
and $\dom(A^*) \subseteq \dom(B^*)$, {then for any $T > 0$ one has}
\be\label{eq:1.9}
\sup_{\gt \in [0,T]}\big\|\big(e^{-\gt A/n}e^{-\gt B/n}\big)^n - e^{-\gt C}\| = O({\ln(n)}/{n^{1-\ga}}) \ .
\ee

Note that the assumption $0 \in \rho(A)$ was made for simplicity and that
the assumption $\dom(A^\ga) \subseteq \dom(B)$ yields that the operator $B$ is infinitesimally small with
respect to $A$. Taking into account \cite[Corollary IX.2.5]{Kato1980} one gets that the well-defined
algebraic sum $C = A+B$ is a generator of a contraction holomorphic semigroup.
By Theorem 3.6 of \cite{CachZag2001} the convergence rate \eqref{eq:1.9} improves if $B$ is a bounded
operator, i.e. $\alpha =0$.
Then for any $T > 0$ one gets
\begin{equation*}
\sup_{\gt \in [0,T]}\big\|\big(e^{-\gt A/n}e^{-\gt B/n}\big)^n - e^{-\gt C}\| = O({(\ln(n))^2}/{n}) \ .
\end{equation*}

Summarizing, the question arises whether the Trotter product formula converges in the operator-norm
if $A$ is a generator of a contraction (but not holomorphic) semigroup and $B$ is a bounded operator?
The aim of the present paper is to give an answer to this question for a certain class of generators.

It turns out that an appropriate class for that  is the class of generators of \textit{evolution}
semigroups. To proceed further we need
the notion of a \textit{propagator}, or a \textit{solution operator} \cite{NeiStephanZag2016}.

A strongly continuous map $U(\cdot,\cdot): \gD \longrightarrow \cB(X)$,
where $\gD := \{(t,s): 0 < s \le t \le T\}$ and $\cB(X)$ is the set of bounded operators on the separable
Banach space $X$, is called a \textit{propagator} if the conditions
%
%
\begin{eqnarray*}
&& {\mathrm{(i)}} \ \sup_{(t,s)\in\gD}\|U(t,s)\|_{\cB(X)} < \infty \ , \\
&& {\mathrm{(ii)}} \ U(t,s) = U(t,r)U(r,s), \ 0 < s \le r \le t \le T \ ,
\end{eqnarray*}
are satisfied. Let us consider the Banach space $L^p(\cI,X)$, $\cI := [0,T]$, $p \in [1,\infty)$.
The operator $\cK$ is an evolution generator of the evolution semigroup $\{e^{-\gt \cK}\}_{\tau\geq0}$ if
there is a propagator such that the representation
\begin{equation}\label{eq:1.5}
(e^{-\gt \cK}f)(t) = U(t,t-\gt)\chi_{\cI}(t-\gt)f(t-\gt), \quad f \in L^p(\cI,X) \ ,
\end{equation}
holds for a.e. $t \in \cI$ and $\gt \ge 0$ \cite{NeiStephanZag2016}.
Since $e^{-\gt \cK} f = 0$ for $\gt \ge T$, the evolution generator $\cK$ can never be a generator of
a holomorphic semigroup.

A simple example of an evolution generator is the differentiation operator:
\begin{equation}\label{eq:1.6}
\begin{split}
(D_0f)(t) &:= \partial_{t} f(t), \\
f \in \dom(D_0) &:= \{f \in H^{1,p}(\cI,X): f(0) = 0\}.
\end{split}
\end{equation}
Then by (\ref{eq:1.6}) one obviously gets the contraction shift semigroup:
\begin{equation}\label{eq:1.61}
(e^{-\gt D_0}f)(t) = \chi_{\cI}(t-\gt)f(t-\gt), \quad f \in L^p(\cI,X),
\end{equation}
for a.e. $t \in \cI$ and $\gt \ge 0$. Hence, (\ref{eq:1.5}) implies that the corresponding
propagator of the non-holomorphic evolution semigroup $\{e^{-\gt D_0}\}_{\tau\geq0}$ is given by
$U_{D_0}(t,s) = I$, $(t,s) \in \gD$.

Note that in \cite{NeiStephanZag2016} we considered the operator $\cK_0 := \overline{D_0 + \cA}$,
where $\cA$ is the multiplication operator induced by a generator $A$ of a holomorphic contraction semigroup
on $X$. More precisely
\begin{equation*}
\begin{split}
(\cA f)(t) &:= Af(t), \ {\mathrm{and}} \ (e^{-\gt \cA} f)(t) = e^{-\gt A}  f(t) \ , \\
f \in \dom(\cA) &:= \{f \in L^p(\cI,X): Af(\cdot) \in L^p(\cI,X)\} \ .
\end{split}
\end{equation*}
Then the perturbation of the shift semigroup (\ref{eq:1.61}) by $\cA$ corresponds to the semigroup with
generator $\cK_0$. One easily checks that $\cK_0$ is an evolution generator of a contraction semigroup
on $L^p(\cI,X)$ that is never holomorphic. Indeed, since the generators $D_0$ and $\cA$ commute, the
representation (\ref{eq:1.5}) for evolution semigroup $\{e^{-\gt \cK_0}\}_{\tau\geq0}$ takes the form:
\begin{equation*}
(e^{-\gt \cl K_0}f)(t) = e^{-\gt A}\chi_{\cI}(t-\gt)f(t-\gt), \quad f \in L^p(\cI,X) \ ,
\end{equation*}
for a.e. $t \in \cI$ and $\gt \ge 0$ with propagator $U_{0}(t,s) = e^{-(t-s) A}\ $. Therefore,
again $e^{-\gt \cK_0} f = 0$ for $\gt \ge T$.

Furthermore, if $B(\cdot)$ is a \textit{strongly measurable} family of generators of contraction semigroups
on $X$, i.e. $B(\cdot): \cI \longrightarrow \cG(1,0)$ (see \cite{Kato1978}, Ch.IX, \S 1.4), then the induced
multiplication operator $\cB$ :
\begin{align}\label{eq:1.65}
(\cB f)(t) &:= B(t) f(t) \ ,\\
f \in \dom(\cB) &:= \left\{f \in L^p(\cI,X):
\!\!\!\!\begin{matrix}
& f(t) \in \dom(B(t)) \, \mbox{for a.e.} \; t \in \cI\\
& B(t)f(t) \in L^p(\cI,X)
\end{matrix}\right\}\, ,\nonumber
\end{align}
is a generator of a contraction semigroup on $L^p(\cI,X)$.

In \cite{NeiStephanZag2016} it was assumed that $\{B(t)\}_{t\in \cI}$ is a strongly measurable family of
generators of contraction semigroups and that $A$ is a generator of a bounded holomorphic semigroup with
$0\in \rho(A)$ for simplicity. Moreover, we supposed that the following conditions are satisfied:
\begin{enumerate}

\item[(i)] $\dom(A^\ga) \subseteq \dom(B(t))$ for a.e. $t \in \cI$ and some $\ga \in (0,1)$ such that
\begin{equation*}
\esssup_{t\in \cI}\|B(t)A^{-\ga}\|_{\cB(X)} < \infty\, ;
\end{equation*}

\item[(ii)] $\dom(A^*) \subseteq \dom(B(t)^*)$ for a.e. $t \in \cI$ such that
\begin{equation*}
\esssup_{t\in \cI}\|B(t)^*(A^{-1})^*\|_{\cB(X)} < \infty\, ;
\end{equation*}

\item[(iii)] there is a $\gb \in (\ga,1)$ and $L_\gb > 0$ such that
\be\label{eq:1.19}
\|A^{-1}(B(t) - B(s))A^{-\ga}\|_{\cB(X)} \le L_\gb |t-s|^\gb, \quad t,s \in \cI.
\ee
\end{enumerate}

Under these assumptions it turns out that $\cK := \cK_0 + \cB$ is a generator of a contraction evolution
semigroup, i.e there is a propagator $\{U(t,s)\}_{(t,s)\in \gD}$ such that the representation
(\ref{eq:1.5}) is valid.
Moreover, we prove in \cite{NeiStephanZag2016} the Trotter product formula converges in the operator norm with
convergence rate $O({1}/{n^{\gb-\ga}})$:
\be
\sup_{\gt \ge 0}\left\|\left(e^{-\gt\cK_0/n}e^{-\gt \cB/n}\right)^n -
e^{-\gt \cK}\right\|_{\cB(L^p(\cI,X))} = O({1}/{n^{\gb-\ga}}) \ \nonumber.
\ee

We comment that if $B(\cdot): \cI \longrightarrow \cB(X)$ is a H\"older continuous function with H\"older
exponent $\gb \in (0,1)$, then the assumptions (i)-(iii) are satisfied for any $\ga \in (0,\gb)$.
Then our results \cite{NeiStephanZag2016} yield  that
\begin{equation}\label{eq:1.191}
\sup_{\gt \ge 0}\left\|\left(e^{-\gt\cK_0/n}e^{-\gt \cB/n}\right)^n -
e^{-\gt \cK}\right\|_{\cB(L^p(\cI,X))} = O({1}/{n^{\gga}}) \ ,
\end{equation}
holds for any $\gga \in  (0,\gb)$. Moreover, in this case the perturbation of the
shift semigroup (\ref{eq:1.61}) by a bounded generator (\ref{eq:1.65}) gives an evolution
semigroup with generator $D_0 + \cB$. Then as a corollary of (\ref{eq:1.191}) for $\cA = 0$,
we get the Trotter product estimate
\be\label{eq:1.23}
\sup_{\gt \ge 0}\left\|\left(e^{-\gt D_0/n}e^{-\gt \cB/n}\right)^n -
e^{-\gt (D_0 + \cB)}\right\|_{\cB(L^p(\cI,X))} = O({1}/{n^{\gga}}) \ .
\ee

The aim of our note is to show that the convergence rate \eqref{eq:1.23} is close
to the optimal one. To this end we consider the simple case, when $X = \C$ and we put for simplicity
$\cI := [0,1]$.

The\textit{ main results} of this paper can be summarized as follows:\\
If the operator $\cB$ is equal to the multiplication operator $Q$ induced by a bounded
measurable function $q(\cdot): \cI \longrightarrow \dC$ in $L^p(\cI)$, then one can verify that the
condition \eqref{eq:1.19} is equivalent to $q(\cdot) \in C^{0,\gb}(\cI)$, see definition below.
In this case the convergence
rate is
%
%
\be\label{eq:1.24}
\sup_{\gt \ge 0}\left\|e^{-\gt (D_0 + Q)}-
\left(e^{-\gt D_0/n}e^{-\gt Q /n}\right)^n \right\|_{\cB(L^p(\cI,X))}
= O(1/n^{\gb}) \ .
\ee
This result remains true if $q(\cdot)$ is Lipschitz continuous, i.e. $\gb = 1$. But if $q(\cdot)$
is \textit{only} continuous, then
\be\label{eq:1.25}
\sup_{\gt \ge 0}\left\|e^{-\gt (D_0 + Q)}-\left(e^{-\gt D_0/n}
e^{-\gt Q/n}\right)^n\right\|_{\cB(L^p(\cI,X))}
= o(1) \ .
\ee
Moreover, for any convergent to zero sequence $\gd_n > 0$, $n \in \dN$, there exists a continuous
function $q(\cdot)$ such that
\be\label{eq:1.26}
\sup_{\gt \ge 0}\left\|e^{-\gt (D_0 + Q)} -
\left(e^{-\gt D_0/n}e^{-\gt Q/n}\right)^n\right\|_{\cB(L^p(\cI,X))} = \go(\gd_n) \ ,
\ee
where the Landau \textit{symbol} $\go(\cdot)$ is defined below.

Finally, there is an example of a bounded measurable function $q(\cdot)$ such that
\be\label{eq:1.27}
\limsup_{n\to\infty} \;\sup_{\gt \ge 0}\left\|e^{-\gt (D_0 + Q)} -
\left(e^{-\gt D_0/n}e^{-\gt Q/n}\right)^{n}\right\|_{\cB(L^p(\cI,X))} > 0 \ .
\ee
Hence, in contrast to the holomorphic case, when the \textit{dominating} operator is a generator of a
holomorphic semigroup (\ref{eq:1.9}), the Trotter product formula \eqref{eq:1.27} with dominating
generator $D_0$, may \textit{not} converge in the operator-norm.

The paper is organized as follows. In Section \ref{TrPrForm} we reformulate the convergence of the
Trotter product formula in terms of the corresponding evolutions semigroups. In Section \ref{BoundPert}
we prove the results \eqref{eq:1.24}-\eqref{eq:1.27}.

We conclude this section by few remarks concerning \textbf{notation} used in this paper.
\begin{enumerate}

\item We use a definition of the \textit{generator} $C$ of a semigroup (\ref{eq:1.8}), which differs from the
standard one by a \textit{minus} \cite{Kato1980}.

\item Furthermore, we widely use the so-called \textit{Landau symbols}:
\begin{align}
g(n) &= O(f(n)) \Longleftrightarrow \limsup_{n\to\infty} \left|\frac{g(n)}{f(n)}\right| < \infty \ ,
\nonumber \\
g(n) &= o(f(n)) \Longleftrightarrow \limsup_{n\to\infty} \left|\frac{g(n)}{f(n)}\right| = 0 \ ,
\nonumber \\
g(n) &= \gT(f(n)) \Longleftrightarrow 0 < \liminf_{n\to\infty} \left|\frac{g(n)}{f(n)}\right|
\le \limsup_{n\to\infty} \left|\frac{g(n)}{f(n)}\right| < \infty \ , \nonumber \\
g(n) &=  \go(f(n)) \Longleftrightarrow \limsup_{n\to\infty} \left|\frac{g(n)}{f(n)}\right| = \infty \ .
\nonumber
\end{align}

\item
We use the notation $C^{0,\beta}(\cl I)=\{f:\cl I\rightarrow \C:
\mathrm{there~is~some~~ } K > 0 \mathrm{~~such~that~ } |f(x)-f(y)|\leq K |x-y|^\beta\}$
for $\beta \in(0,1]$.

\end{enumerate}
\section{Trotter product formula and evolution semigroups} \label{TrPrForm}
Below we consider the Banach space $L^p(\cI,X)$ for  $\cI := [0, T]$, $p \in [1,\infty)$. Recall that
semigroup $\{\cU(\gt)\}_{\gt \ge 0}$, on the Banach space $L^p(\cI,X)$ is called an \textit{evolution}
semigroup if there is a propagator $\{U(t,s)\}_{(t,s)\in\gD}$ such that the representation
(\ref{eq:1.5}) holds.

Let $\cK_0$ be the generator of an evolution semigroup $\{\cU_0(\gt)\}_{\gt \ge 0}$ and let $\cB$
be a multiplication operator induced by a measurable family $\{B(t)\}_{t\in\cI}$ of generators of
contraction semigroups. Note that in this case the multiplication operator $\cB$ (\ref{eq:1.65}) is a
generator of a contraction semigroup $(e^{- \tau \, \cB} f)(t) = e^{- \tau \, B(t)} f(t)$,
on the Banach space $L^p(\cI,X)$. Since $\{\cU_0(\gt)\}_{\gt \ge 0}$ is an evolution semigroup, then by
definition (\ref{eq:1.5})
there is a propagator $\{U_0(t,s)\}_{(t,s) \in \gD}$ such that the representation
\begin{equation*}
(\cU_0(\gt)f)(t) = U_0(t,t-\gt)\chi_\cI(t-\gt)f(t-\gt), \quad f \in L^p(\cI,X),
\end{equation*}
is valid for a.e. $t \in \cI$ and $\gt \ge 0$. Then we define
\begin{equation*}
G_j(t,s;n) := U_0(s + j\tfrac{(t-s)}{n},s+ (j-1)\tfrac{(t-s)}{n})
e^{-\frac{(t-s)}{n} B\big(s + (j-1)\tfrac{(t-s)}{n}\big)}
\end{equation*}
where $j \in \{1,2,\ldots,n\}$, $n \in \dN$, $(t,s) \in \gD$, and we set
\begin{equation*}
V_n(t,s) := \prod^{n\,\leftarrow}_{j=1}G_j(t,s;n), \quad n \in \dN, \quad (t,s) \in \gD,
\end{equation*}
where the product is increasingly ordered in $j$ from the right to the left.
Then a straightforward computation shows that the representation
\begin{equation}\label{eq:2.01}
\left(\left(e^{-\gt \cK_0/n}e^{-\gt \cB/n}\right)^n f\right)(t) =
V_n(t,t-\gt)\chi_\cI(t-\gt)f(t-\gt) \ ,
\end{equation}
$f \in L^p(\cI,X)$, holds for each $\gt \ge 0$ and a.e. $t \in \cI$.
%
\begin{prop}\label{prop:2.1}
Let $\cK$ and $\cK_0$ be generators of evolution semigroups on the Banach space $L^p(\cI,X)$ for some
$p \in [1,\infty)$. Further, let $\{B(t)\in \cG(1,0)\}_{t\in \cI}$ be a strongly measurable family of
generators of contraction on $X$ semigroups. Then
\be\label{eq:2.0}
\sup_{\gt\ge 0}\left\|e^{-\gt \cK} - \left(e^{-\gt \cK_0/n}e^{-\gt \cB/n}\right)^n\right\|_{\cB(L^p(\cI,X))}
 = \esssup_{(t,s)\in \gD}\|U(t,s) - V_n(t,s)\|_{\cB(X)}, \quad n\in \dN.
\ee
\end{prop}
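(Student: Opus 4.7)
The plan is to express both operators in the difference as ``multiplication by an operator-valued kernel composed with a shift'' and then identify the $\cB(L^p(\cI,X))$-norm with a pointwise essential supremum in $\cB(X)$. By the defining representation \eqref{eq:1.5} of the evolution semigroup generated by $\cK$, one has $(e^{-\gt\cK}f)(t)=U(t,t-\gt)\chi_\cI(t-\gt)f(t-\gt)$, while \eqref{eq:2.01} gives the same shape for the Trotter approximant with $U$ replaced by $V_n$. Writing $W_n(t,s):=U(t,s)-V_n(t,s)$, the difference operator acts as
\begin{equation*}
\bigl([e^{-\gt\cK}-(e^{-\gt\cK_0/n}e^{-\gt\cB/n})^n]f\bigr)(t) = W_n(t,t-\gt)\,\chi_\cI(t-\gt)\,f(t-\gt),
\end{equation*}
so the task reduces to identifying the $\sup_{\gt}$ of its $\cB(L^p(\cI,X))$-norm with $N:=\esssup_{(t,s)\in\gD}\|W_n(t,s)\|_{\cB(X)}$.

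For the ``$\le$'' inequality I would simply integrate and change variables $s=t-\gt$, obtaining
\begin{equation*}
\bigl\|[e^{-\gt\cK}-(e^{-\gt\cK_0/n}e^{-\gt\cB/n})^n]f\bigr\|_{L^p}^{p}
\le\int_{0}^{T-\gt}\|W_n(s+\gt,s)\|_{\cB(X)}^{p}\,\|f(s)\|_{X}^{p}\,ds
\le N^{p}\|f\|_{L^p}^{p},
\end{equation*}
uniformly in $\gt\ge0$, as the cut-off $\chi_\cI(t-\gt)$ makes both sides vanish for $\gt>T$.

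The reverse ``$\ge$'' inequality is the substantive step. The change of variables $(t,s)\mapsto(t-s,s)=(\gt,s)$, being a measure-preserving bijection of $\gD$ onto $\{(\gt,s):\gt,s\ge0,\;s+\gt\le T\}$, yields the ``Fubini for essential suprema'' identity
\begin{equation*}
N=\esssup_{\gt\in[0,T]}\,\esssup_{s\in[0,T-\gt]}\|W_n(s+\gt,s)\|_{\cB(X)}.
\end{equation*}
Given $\e>0$, I would fix $\gt_0$ for which the inner esssup exceeds $N-\e$. Using separability of $X$, choose a countable dense $\{y_k\}\subset\{x\in X:\|x\|=1\}$; then $\|W_n(s+\gt_0,s)\|_{\cB(X)}=\sup_{k}\|W_n(s+\gt_0,s)y_k\|_X$, so the measurable sets $E_k:=\{s\in[0,T-\gt_0]:\|W_n(s+\gt_0,s)y_k\|_X>N-2\e\}$ cover, up to a null set, the set where $\|W_n(s+\gt_0,s)\|_{\cB(X)}>N-\e$. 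Hence some $E_k$ has positive measure, and the test function $f(s):=y_k\chi_{E_k}(s)/|E_k|^{1/p}\in L^p(\cI,X)$ of unit norm satisfies $\|[e^{-\gt_0\cK}-(e^{-\gt_0\cK_0/n}e^{-\gt_0\cB/n})^{n}]f\|_{L^p}>N-2\e$; letting $\e\downarrow 0$ closes the argument.

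The main obstacle I anticipate is the measurable selection implicit in the ``$\ge$'' step: one cannot in general choose unit vectors $x_s\in X$ measurably in $s$ that realise the operator norm of $W_n(s+\gt_0,s)$. The countable-density trick above bypasses this, and the same separability argument is what guarantees that $(t,s)\mapsto\|W_n(t,s)\|_{\cB(X)}$ is genuinely measurable (as a countable supremum of measurable scalar functions), which legitimises both the use of the essential supremum and the Fubini-type step.
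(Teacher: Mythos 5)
Your reduction to the kernel--shift representation, the Fubini-type identity for iterated essential suprema, and the separability/test-function argument for the lower bound are all sound; indeed your lower bound is more self-contained than the paper's, which instead composes with the left-shift semigroup $L(\gt)$ to turn the difference into a multiplication operator and then quotes the norm formula for operator-valued multiplication operators on $L^p(\cI,X)$. The genuine gap is in your upper bound, in the words ``uniformly in $\gt\ge 0$''. The estimate $\int_0^{T-\gt}\|W_n(s+\gt,s)\|^p\|f(s)\|^p\,ds\le N^p\|f\|_{L^p}^p$ requires $\|W_n(s+\gt,s)\|_{\cB(X)}\le N$ for a.e.\ $s$ \emph{for the particular $\gt$ considered}. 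But $N$ is a two-dimensional essential supremum over $\gD$, and the line $\{t-s=\gt\}$ is a two-dimensional null set: Fubini only yields the slice bound for a.e.\ $\gt$, not for every $\gt$. Since $V_n(t,s)$ is built from the merely strongly measurable family $B(\cdot)$ evaluated at points moving with $(t,s)$, you cannot claim the slices vary continuously in $\gt$, and for a general jointly measurable kernel the supremum over \emph{all} $\gt$ of the slice essential suprema can strictly exceed $N$; so the step as written does not follow. (Your own fixed-$\gt$ computations actually prove the identity $\|e^{-\gt\cK}-(e^{-\gt\cK_0/n}e^{-\gt\cB/n})^n\|=\esssup_{s}\|W_n(s+\gt,s)\|\chi_\cI(s+\gt)$ for each single $\gt$; the problem is only in passing from a.e.\ $\gt$ to $\sup_{\gt\ge0}$.)

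What closes the gap --- and this is precisely the extra ingredient in the paper's proof --- is that $\gt\mapsto e^{-\gt\cK}-(e^{-\gt\cK_0/n}e^{-\gt\cB/n})^n$ is strongly continuous, hence its operator norm is lower semicontinuous in $\gt$, so the supremum over $\gt\ge0$ of these norms equals their essential supremum. Combining that with your fixed-$\gt$ identity and your change-of-variables/Fubini identity for essential suprema yields \eqref{eq:2.0}. So your argument is repairable, but the uniform-in-$\gt$ claim needs the strong-continuity (sup equals esssup) step that you omitted.
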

\begin{proof}
Let $\{L(\gt)\}_{\gt \ge 0}$ be the left-shift semigroup on the Banach space $\gotX = L^p(\cI,X)$:
\begin{equation*}
(L(\gt)f)(t) = \chi_\cI(t+\gt)f(t+\gt), \quad f \in L^p(\cI,X).
\end{equation*}
Using that we get
\begin{align*}
\left(L(\gt)\left(e^{-\gt \cK} - \left(e^{-\gt/n \cK_0}e^{-\gt \cB/n}\right)^n\right)f\right)(t) =
\left\{U(t+\gt,t) - V_n(t+\gt,t)\right\}\chi_\cI(t+\gt)f(t) \ ,
\end{align*}
for $\gt \ge 0$ and a.e. $t \in \cI$. It turns out that for each $n \in \dN$ the operator
$L(\gt)\left(e^{-\gt \cK} - \left(e^{-\gt/n \cK_0}e^{-\gt \cB/n}\right)^n\right)$ is a multiplication operator
 induced by $\{(U(t+\gt,t) - V_n(t+\gt,t))\chi_\cI(t+\gt)\}_{t\in\cl I}$. Therefore,
\be
\left\|L(\gt)\left(e^{-\gt \cK} - \left(e^{-\gt\cK_0/n}e^{-\gt \cB/n}\right)^n\right)\right\|_{\cB(\gotX)}=
\esssup_{t\in \cI}\|U(t+\gt,t) - V_n(t+\gt,t)\|_{\cB(X)}\chi_\cI(t+\gt) \ , \nonumber
\ee
for each $\gt \ge 0$. Note that one has
\be
\sup_{\gt \ge 0}\left\|L(\gt)\left(e^{-\gt \cK} -
\left(e^{-\gt\cK_0/n}e^{-\gt \cB/n}\right)^n\right)\right\|_{\cB(\gotX)} = \esssup_{\gt \ge 0}\left\|L(\gt)\left(e^{-\gt \cK} - \left(e^{-\gt\cK_0/n}
e^{-\gt \cB/n}\right)^n\right)\right\|_{\cB(\gotX)} .
\nonumber
\ee
This is based on the fact that if $F(\cdot): \dR_+ \longrightarrow \cB(\gotX)$ is strongly continuous, then
$\sup_{\gt \ge 0}\|F(\gt)\|_{\cB(\gotX)} = \esssup_{\gt \ge 0}\|F(\gt)\|_{\cB(\gotX)}$.
Hence, we find
\be
\sup_{\gt \ge 0}\left\|L(\gt)\left(e^{-\gt \cK} -
\left(e^{-\gt\cK_0/n}e^{-\gt \cB/n}\right)^n\right)\right\|_{\cB(\gotX)}=
\esssup_{\gt \ge 0}\esssup_{t\in \cI}\|U(t+\gt,t) - V_n(t+\gt,t))\|_{\cB(X)}\chi_\cI(t+\gt).\nonumber
\ee
Further, if $\Phi(\cdot,\cdot): \dR_+ \times \cI \longrightarrow \cB(X)$ is a strongly
measurable function, then
\begin{equation*}
\esssup_{(\gt,t)\in\dR_+\times\cI}\|\Phi(\gt,t)\|_{\cB(X)} =
\esssup_{\gt \ge 0}\,\esssup_{t\in\cI}\|\Phi(\gt,t)\|_{\cB(X)} .
\end{equation*}
Then, taking into account two last equalities, one obtains
\begin{align*}
\sup_{\gt \ge 0}\left\|L(\gt)\left(e^{-\gt \cK} -
\left(e^{-\gt\cK_0/n}e^{-\gt \cB/n}\right)^n\right)\right\|_{\cB(\gotX)} &=
\esssup_{(\gt,t)\in \dR_+\times \cI}\|U(t+\gt,t) - V_n(t+\gt,t)\|_{\cB(X)}\chi_\cI(t+\gt)=\nonumber\\
& =
\esssup_{(t,s)\in \gD}\|U(t,s) - V_n(t,s)\|_{\cB(X)} \ , \nonumber
\end{align*}
that proves (\ref{eq:2.0})
\end{proof}

\section{Bounded perturbations of the shift semigroup generator}\label{BoundPert}
\subsection{Basic facts}\label{BoundPert1}
We study bounded perturbations of the evolution generator  $D_0$ (\ref{eq:1.6}).
To do this aim we consider $\cl I =[0,1]$, $X= \C$ and we denote by  $L^p(\cl I)$ the Banach
space $L^p(\cl I, \C)$.

For $t \in \cl I$, let $q: t \mapsto q(t) \in L^\infty(\cl I)$.
Then, $q$ induces a bounded multiplication operator $Q$ on the Banach space $L^p(\cl I)$:
\begin{align}
(Qf)(t) = q(t) f(t), ~~f\in L^p(\cl I). \nonumber
\end{align}
For simplicity we assume that $q\geq 0$.
Then $Q$ generates on $L^p(\cl I)$ a contraction semigroup $\{e^{- \tau Q}\}_{\tau \geq 0}$.
Since generator $Q$ is bounded, the closed operator $\cl A:= D_0 + Q$, with domain $\dom(\cl A) = \dom(D_0)$,
is generator of a semigroup on $L^p(\cl I)$. By \cite{Trotter1959}, the Trotter product formula in the strong topology follows immediately
\begin{equation}\label{eq:3.00}
 \left(e^{-\gt D_0/n}e^{-\gt Q/n}\right)^n f \rightarrow e^{-\gt (D_0+Q)}f,  \quad f\in L^p(\cI),
\end{equation}
uniformly in $\tau\in[0,T]$ on bounded time intervals.

Following \cite[\S 5]{Chern1974}, we define on $X= \C$ a family of
bounded operators $\{V(t)\}_{t\in\cl I}$ by
\begin{align*}
 V(t) : =  e^{- \int_0^t ds q(s) } \ .
\end{align*}
Note that for almost every $t\in \cl I$ these operators are positive. Then $V^{-1}(t)$ exists
and it has the form
\begin{align*}
V^{-1}(t) =  e^{ \int_0^t ds q(s) }.
\end{align*}
The operator families $\{V(t)\}_{t\in\cl I}$ and $\{V^{-1}(t)\}_{t\in\cl I}$ induce two
bounded multiplication operators $\cl V $ and $\cl V ^{-1}$ on $L^p(\cl I)$, respectively. Then invertibility implies that
$\cl V \ \cl V^{-1} = \cl V^{-1} \, \cl V = Id|_{L^p}$.
Using the operator $\cl V$ one easily verifies that $D_0+Q$ is {similar} to $D_0$, i.e. one has
\begin{align}
\cl V^{-1}(D_0 + Q)\cl V= D_0, ~~\mathrm{or}~~D_0 + Q= \cl V D_0 \cl V^{-1} \ . \nonumber
\end{align}
Hence, the semigroup generated on $L^p(\cl I)$  by $D_0 + Q$ gets the explicit form:
\begin{align}\label{eq:3.0}
\left(e^{-\tau(D_0 + Q)}f\right)(t) = \left(\cl Ve^{-\tau D_0} \cl V^{-1} f\right)(t) = e^{-\int_{t-\tau}^t q(y) dy}f(t-\tau)\chi_{\cl I}(t-\tau) \ .
\end{align}
Since by (\ref{eq:1.5}) the propagator $U(t,s)$ that corresponds to evolution semigroup (\ref{eq:3.0}) is
defined by
\begin{align}
\left(e^{-\tau(D_0 + Q)}\right)f(t) = U(t, t-\tau) f(t-\tau) \chi_{\cl I}(t-\tau) \ , \nonumber
\end{align}
we deduce that it is equal to $U(t, s) = e^{-\int_s^t dy \, q(y) }$.

Now we study the corresponding Trotter product formula. For a fixed $\tau \geq 0$ and $n\in \mathbb{N}$,
we define approximation $V_n$ by
\begin{equation*}
\left(\left(e^{- \tau D_0/n}e^{- \tau Q/n}  \right)^n f\right)(t) =:
V_n(t,t-\gt)\chi_\cI(t-\gt)f(t-\gt) \ .
\end{equation*}
Then by straightforward calculations, similar to  (\ref{eq:2.01}), one finds that
\begin{equation*}
V_n(t,s)= e^{-\tfrac{t-s}{n} \sum_{k=0}^{n-1} q(s + k\tfrac{t-s}{n})},\quad (t,s) \in \gD \ .
\end{equation*}
%
\begin{prop}\label{prop:3.1}
Let $q \in L^\infty(\cI)$ be non-negative. Then
\be
\sup_{\gt \ge 0}\left\|e^{-\gt(D_0 + Q)} - \left(e^{-\gt D_0/n}e^{-\gt Q/n}\right)^n\right\|_{\cB(L^p(\cl I))}
=
\gT\left(\esssup_{(t,s)\in\gD}\Big|\int^t_s q(y)dy - \frac{t-s}{n} \sum_{k=0}^{n-1} q(s +
k\tfrac{t-s}{n})\Big|\right)
\ee
as $n\to\infty$, where $\gT$ is the Landau symbol defined in Section \ref{Intr}.
\end{prop}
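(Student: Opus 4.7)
The plan is to reduce the claim, via Proposition \ref{prop:2.1}, to a two-sided estimate of the pointwise (in $(t,s)\in\gD$) difference $|U(t,s)-V_n(t,s)|$ and then to note that since $U$ and $V_n$ are scalar exponentials of bounded exponents, this pointwise difference is controlled above and below by the difference between the integral $\int_s^t q(y)\,dy$ and its left Darboux-Riemann sum of order $n$.

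First, apply Proposition \ref{prop:2.1} with $\cK_0=D_0$, $\cB=Q$ and $\cK=D_0+Q$. Noting that the evolution semigroup generated by $D_0+Q$ is realised by the propagator $U(t,s)=\exp(-\int_s^t q(y)\,dy)$ (as spelled out in (\ref{eq:3.0})) and that the approximant simplifies, in the scalar case, to $V_n(t,s)=\exp\bigl(-\tfrac{t-s}{n}\sum_{k=0}^{n-1}q(s+k\tfrac{t-s}{n})\bigr)$, we obtain
\begin{equation*}
\sup_{\gt\ge 0}\Bigl\|e^{-\gt(D_0+Q)}-\bigl(e^{-\gt D_0/n}e^{-\gt Q/n}\bigr)^n\Bigr\|_{\cB(L^p(\cI))}
=\esssup_{(t,s)\in\gD}\bigl|e^{-I(t,s)}-e^{-R_n(t,s)}\bigr|,
\end{equation*}
where $I(t,s):=\int_s^t q(y)\,dy$ and $R_n(t,s):=\tfrac{t-s}{n}\sum_{k=0}^{n-1}q(s+k\tfrac{t-s}{n})$.

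Next, since $q\ge 0$ and $q\in L^\infty(\cI)$ with $\cI=[0,1]$, both exponents satisfy $0\le I(t,s),\,R_n(t,s)\le \|q\|_\infty$ uniformly in $(t,s)\in\gD$ and in $n$. By the mean value theorem applied to $x\mapsto e^{-x}$ on $[0,\|q\|_\infty]$,
\begin{equation*}
e^{-\|q\|_\infty}\,\bigl|I(t,s)-R_n(t,s)\bigr|\ \le\ \bigl|e^{-I(t,s)}-e^{-R_n(t,s)}\bigr|\ \le\ \bigl|I(t,s)-R_n(t,s)\bigr|.
\end{equation*}
Taking $\esssup_{(t,s)\in\gD}$ in this two-sided inequality yields constants $c_1:=e^{-\|q\|_\infty}>0$ and $c_2:=1$, independent of $n$, such that
\begin{equation*}
c_1\,\esssup_{(t,s)\in\gD}\bigl|I(t,s)-R_n(t,s)\bigr|
\ \le\
\esssup_{(t,s)\in\gD}\bigl|e^{-I(t,s)}-e^{-R_n(t,s)}\bigr|
\ \le\ c_2\,\esssup_{(t,s)\in\gD}\bigl|I(t,s)-R_n(t,s)\bigr|,
\end{equation*}
which, by the definition of the Landau symbol $\gT$ recalled in Section \ref{Intr}, is exactly the claimed identity.

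The step that requires the most care is verifying that the constants $c_1,c_2$ in the two-sided bound are genuinely independent of $n$ and of $(t,s)\in\gD$; this is guaranteed precisely because $q$ is bounded and non-negative, so that both $I(t,s)$ and $R_n(t,s)$ remain in the fixed compact interval $[0,\|q\|_\infty]$ on which $x\mapsto e^{-x}$ has derivative bounded away from zero and infinity. Without non-negativity of $q$ the upper exponent bound would still hold but the lower one would require instead $e^{-\|q\|_\infty}\le e^{-\xi}\le e^{\|q\|_\infty}$; the conclusion would be unchanged. No further regularity of $q$ is used, which is consistent with the fact that the right-hand side of the stated equivalence is precisely the quantity that later, in (\ref{eq:1.24})--(\ref{eq:1.27}), will be analysed in terms of the modulus of continuity of $q$.
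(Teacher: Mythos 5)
Your proof is correct and follows essentially the same route as the paper: reduction via Proposition \ref{prop:2.1} to the scalar quantity $\esssup_{(t,s)\in\gD}|e^{-I(t,s)}-e^{-R_n(t,s)}|$, followed by the two-sided elementary bound $e^{-\|q\|_\infty}|x-y|\le|e^{-x}-e^{-y}|\le|x-y|$ on $[0,\|q\|_\infty]$ (the paper states this inequality directly where you invoke the mean value theorem, which is the same estimate). No substantive difference.
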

\begin{proof}
First, by Proposition \ref{prop:2.1} and by $U(t, s) = e^{-\int_s^t dy \, q(y) }$ we obtain
\be\label{eq:3.01}\sup_{\gt \ge 0}\left\|e^{-\gt(D_0 + Q)} -
\left(e^{-\gt D_0/n}e^{-\gt Q/n}\right)^n\right\|_{\cB(L^p(\cl I))}= \esssup_{(t,s)\in \gD}\left|e^{-\int^t_s dy \, q(y)} - e^{-\tfrac{t-s}{n} \sum_{k=0}^{n-1} q(s +
k\tfrac{t-s}{n})}\right| \ .
\ee
Then, using the inequality
\begin{equation*}
e^{-\max\{x,y\}}|x-y| \le |e^{-x} - e^{-y}| \le |x-y|, \quad 0 \le x, y \ ,
\end{equation*}
for $0 \leq s < t \leq 1$ one finds the estimates
\bed
e^{-\|q\|_{L^\infty}} R_n(t,s;q) \le \\
\Big|e^{-\int^t_s dy \, q(y)} - e^{-\tfrac{t-s}{n}
\sum_{k=0}^{n-1} q(s + k\tfrac{t-s}{n})}\Big| \le R_n(t,s;q) \ ,
\eed
where
\begin{equation}\label{eq:3.02}
R_n(t,s,q) := \Big|\int^t_s dy \, q(y) - \frac{t-s}{n} \sum_{k=0}^{n-1} q(s + k\tfrac{t-s}{n})\Big| \ ,
\quad (t,s) \in \gD \ .
\end{equation}
Hence, for the left-hand side of (\ref{eq:3.01}) we get the estimate
\begin{equation*}
e^{-\|q\|_{L^\infty}} R_n(q) \le \sup_{\gt \ge 0}\left\|e^{-\gt(D_0 + Q)} - \left(e^{-\gt D_0/n}
e^{-\gt Q/n}\right)^n\right\|_{\cB(L^p)} \le R_n(q) \ ,
\end{equation*}
where $R_n(q) := \esssup_{(t,s)\in \gD}R_n(t,s;q)$, $n \in \dN$. These estimates together with definition
of $\gT$ prove the assertion.
\end{proof}
Note that by virtue of (\ref{eq:3.02}) and Proposition \ref{prop:3.1} the operator-norm
convergence rate of the Trotter product formula for the pair $\{D_0 , Q\}$ coincides with the convergence
rate of the integral Darboux-Riemann sum approximation of the Lebesgue integral.

\subsection{Examples}\label{Example}
First we consider the case of a real H\"older-continuous function $q\in C^{0,\gb}(\cI)$.
\begin{thm}
If {{$q \in C^{0,\gb}(\cI)$} is non-negative}, then
\begin{equation*}
\sup_{\gt \ge 0}\left\|e^{-\gt(D_0+Q)} - \left(e^{-\gt D_0/n}e^{-\gt Q/n}\right)^n\right\| =
O({1}/{n^\gb}) \ ,
\end{equation*}
as $n \to \infty$.
\end{thm}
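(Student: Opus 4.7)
The plan is to apply Proposition \ref{prop:3.1} and then control the quantity
\[
R_n(q) := \esssup_{(t,s)\in\gD}\left|\int^t_s q(y)\,dy - \frac{t-s}{n}\sum_{k=0}^{n-1} q\bigl(s + k\tfrac{t-s}{n}\bigr)\right|
\]
by the Hölder regularity of $q$. By Proposition \ref{prop:3.1}, it suffices to show $R_n(q) = O(1/n^\gb)$, and the exponential prefactor $e^{-\|q\|_{L^\infty}}$ only affects constants, so the whole operator-norm bound then follows.

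First, I would fix $(t,s)\in\gD$ and set $h := (t-s)/n \le 1/n$. The integral splits as a sum over the partition points $s_k := s + k h$, $k=0,\dots,n-1$, namely
\[
\int_s^t q(y)\,dy = \sum_{k=0}^{n-1}\int_{s_k}^{s_{k+1}} q(y)\,dy,
\]
so that
\[
R_n(t,s;q) = \left|\sum_{k=0}^{n-1}\int_{s_k}^{s_{k+1}}\bigl(q(y)-q(s_k)\bigr)\,dy\right|.
\]
This rewrites the Riemann-sum error in the standard telescoping form, which is the natural object for a Hölder estimate.

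Next, using $q\in C^{0,\gb}(\cI)$ with Hölder constant $K$, for $y\in[s_k,s_{k+1}]$ one has $|q(y)-q(s_k)| \le K |y-s_k|^\gb \le K h^\gb$. Summing,
\[
R_n(t,s;q) \le \sum_{k=0}^{n-1} K h^\gb \cdot h = K\, n \, h^{1+\gb} = K\,(t-s)^{1+\gb}\, n^{-\gb} \le K\, n^{-\gb},
\]
where the last step uses $t-s \le |\cI|=1$. Taking the essential supremum over $(t,s)\in\gD$ yields $R_n(q) \le K/n^\gb$, and combining with Proposition \ref{prop:3.1} gives the desired $O(1/n^\gb)$ bound.

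There is no real obstacle here: the argument reduces, via Proposition \ref{prop:3.1}, to the classical error estimate for the left Darboux–Riemann sum of a Hölder continuous function, and the Hölder condition is precisely tailored to produce the $n^{-\gb}$ rate. The only thing to watch is that the sup in Proposition \ref{prop:3.1} is over all $(t,s)\in\gD$ (i.e.\ over all mesh widths $h=(t-s)/n$), but since the bound is uniform in $(t,s)$ through the factor $(t-s)^{1+\gb}\le 1$, this causes no difficulty.
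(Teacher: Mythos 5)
Your proposal is correct and follows essentially the same route as the paper: both reduce the statement via Proposition \ref{prop:3.1} to the error of the left Darboux--Riemann sum, split the integral over the partition subintervals, and apply the H\"older bound on each piece to obtain the uniform estimate $K(t-s)^{1+\gb}n^{-\gb}\le K n^{-\gb}$.
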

\begin{proof}
One has
\begin{equation*}
\int^t_s dy \, q(y) - \frac{t-s}{n}\sum^{n-1}_k q(s + \tfrac{k}{n}(t-s))
= \sum^{n-1}_{k=0}\int^{\tfrac{k+1}{n}(t-s)}_{\tfrac{k}{n}(t-s)}dy \left(q(s+y) -
q(s+\tfrac{k}{n}(t-s))\right)\ ,
\end{equation*}
which yields the estimate
\begin{equation*}
\Big|\int^t_s dy \, q(y) - \frac{t-s}{n}\sum^{n-1}_k q(s + \tfrac{k}{n}(t-s))\Big|\\
\le \sum^{n-1}_{k=0}\int^{\tfrac{k+1}{n}(t-s)}_{\tfrac{k}{n}(t-s)}
dy \left|q(s+y) - q(s+\tfrac{k}{n}(t-s))\right| \ .
\end{equation*}
Since $q \in C^{0,\gb}(\cI)$,  there is a constant $L_\gb > 0$ such that for $y\in[\frac k n (t-s),
\frac{k+1} n (t-s)]$ one has
\begin{equation*}
\left|q(s+y) - q(s+\tfrac{k}{n}(t-s)\right| \le L_\gb|y-\tfrac{k}{n}(t-s)|^\gb \le
L_\gb \frac{(t-s)^\gb}{n^\gb}\ .
\end{equation*}
Hence, we find
\begin{equation*}
\Big|\int^t_s q(y)dy - \frac{t-s}{n}\sum^{n-1}_k q(s + \tfrac{k}{n}(t-s))\Big|
\le L_\gb \frac{(t-s)^{1+\gb}}{n^\gb} \le L_\gb \frac{1}{n^\gb} \ ,
\end{equation*}
which proves
\begin{equation*}
\esssup_{(t,s)\in \gD}\Big|\int^t_s q(y)dy - \frac{t-s}{n}\sum^{n-1}_k q(s + \tfrac{k}{n}(t-s))\Big|
= O\left(\frac{1}{n^\gb}\right)\, .
\end{equation*}
Applying now Proposition \ref{prop:3.1} one completes the proof.
\end{proof}

It is a natural question: what happens, when $q$ is only continuous?
%
\begin{thm}
If $q: \cl I \rightarrow \C$ is continuous and non-negative, then
\begin{equation}\label{eq:4.10}
\left\|e^{-\gt(D_0+Q)} - \left(e^{-\gt D_0/n}e^{-\gt Q/n}\right)^n\right\| = o(1) \ ,
\end{equation}
as $n\to \infty$.
\end{thm}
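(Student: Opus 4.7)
The plan is to reduce everything to Proposition 3.1 and then invoke uniform continuity of $q$ on the compact interval $\cl I = [0,1]$.

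First, by Proposition 3.1, it suffices to show that
\[
R_n(q) := \esssup_{(t,s)\in\gD}\Big|\int^t_s q(y)\,dy - \frac{t-s}{n}\sum_{k=0}^{n-1} q\big(s + k\tfrac{t-s}{n}\big)\Big| \longrightarrow 0
\]
as $n\to\infty$. Once this is established, the sandwich estimate
\[
e^{-\|q\|_{L^\infty}} R_n(q) \le \sup_{\gt \ge 0}\left\|e^{-\gt(D_0+Q)} - \left(e^{-\gt D_0/n}e^{-\gt Q/n}\right)^n\right\| \le R_n(q)
\]
from the proof of Proposition 3.1 immediately yields \eqref{eq:4.10}.

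Second, since $q$ is continuous on the compact interval $\cl I$, it is uniformly continuous. Fix $\e > 0$ and choose $\gd > 0$ such that $|y-y'| < \gd$ implies $|q(y)-q(y')| < \e$. For any $(t,s)\in\gD$ we have $t-s \le 1$, so whenever $n > 1/\gd$ the subinterval length $(t-s)/n$ is strictly less than $\gd$. Splitting the integral as in the proof of the H\"older case and using uniform continuity on each subinterval,
\[
\Big|\int^t_s q(y)\,dy - \tfrac{t-s}{n}\sum_{k=0}^{n-1}q\big(s+k\tfrac{t-s}{n}\big)\Big|
\le \sum_{k=0}^{n-1}\int_{s+k(t-s)/n}^{s+(k+1)(t-s)/n}\!\!\!\big|q(y)-q\big(s+k\tfrac{t-s}{n}\big)\big|\,dy
\le \e(t-s) \le \e.
\]
This bound is uniform in $(t,s)\in\gD$, hence $R_n(q) \le \e$ for all sufficiently large $n$, which proves $R_n(q) \to 0$.

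There is essentially no obstacle here: the only point to check is that the uniform continuity modulus of $q$ controls the Darboux-Riemann sum error \emph{uniformly} over the family of partitions indexed by $(t,s)\in\gD$. This is automatic because the mesh size $(t-s)/n$ is bounded by $1/n$ irrespective of $(t,s)$, so the single modulus $\gd$ works simultaneously for all $(t,s)$. The essential supremum over $\gD$ therefore goes to zero, and Proposition 3.1 delivers the claimed $o(1)$ estimate.
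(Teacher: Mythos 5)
Your proof is correct and follows essentially the same route as the paper: reduce to the Darboux--Riemann sum error via Proposition \ref{prop:3.1} (equivalently the sandwich estimate from its proof) and then use uniform continuity of $q$ on the compact interval to bound the error by $\e(t-s)\le\e$ uniformly in $(t,s)\in\gD$. The only difference is cosmetic --- you make the appeal to uniform continuity and the uniformity of the mesh bound $(t-s)/n\le 1/n$ explicit, which the paper leaves implicit.
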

\begin{proof}
Since $q(\cdot)$ is continuous, then for any $\varepsilon > 0$ there is $\gd > 0$ such that for
$|y-x| < \gd$ we have $|q(y) - q(x)| < \varepsilon$, $y,x \in \cI$.
Therefore, if $1/n < \gd$, then for $y \in (\frac{k}{n}(t-s),\frac{k+1}{n}(t-s))$ we have
\begin{equation*}
|q(s+y) - q(s+\tfrac{k}{n}(t-s))| < \varepsilon, \quad (t,s) \in \gD \ .
\end{equation*}
Hence,
\begin{equation*}
\Big|\int^t_s q(y)dy - \frac{t-s}{n}\sum^{n-1}_k q(s + \tfrac{k}{n}(t-s))\Big|
\le \varepsilon (t-s) \le \varepsilon \ ,
\end{equation*}
which yields
\begin{equation*}
\esssup_{(t,s)\in\gD}\Big|\int^t_s q(y)dy - \frac{t-s}{n}\sum^{n-1}_k q(s + \tfrac{k}{n}(t-s))\Big| = o(1) \ .
\end{equation*}
Now it remains only to apply Proposition \ref{prop:3.1}.
\end{proof}

We comment that for a general continuous $q$ one can say nothing about the convergence rate. Indeed, it can
be shown that in (\ref{eq:4.10}) the convergence to zero can be arbitrary slow.
%
\begin{thm}\label{thm: 4.3}
 Let $\gd_n>0$ be a sequence with $\gd_n \to 0$ as $n \to \infty$. Then there exists a continuous
 function $q:\cI = [0,1] \rightarrow \dR$ such that
 \be\label{eq:4.11}
\sup_{\gt \ge 0}\left\|e^{-\gt(D_0 + Q)} - \left(e^{-\gt D_0/n}
e^{-\gt Q/n}\right)^n\right\|_{\cl B( L^p(\cl I))} = \go(\gd_n)
 \ee
as $n\to\infty$, where $\omega$ is the Landau symbol defined in Section \ref{Intr}.
\end{thm}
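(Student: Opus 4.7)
By Proposition \ref{prop:3.1}, the left-hand side of \eqref{eq:4.11} is $\gT$-equivalent to
\[
R_n(q) := \esssup_{(t,s)\in\gD}\bigg|\int_s^t q(y)\,dy - \tfrac{t-s}{n}\sum_{k=0}^{n-1} q\bigl(s+k\tfrac{t-s}{n}\bigr)\bigg|,
\]
so it suffices to exhibit a nonnegative $q\in C(\cI)$ with $\limsup_n R_n(q)/\gd_n=\infty$. Since the pair $(t,s)=(1,0)$ (reached by a sequence in $\gD$) already provides a lower bound on $R_n(q)$, I focus on the classical Riemann sum $\tfrac{1}{n}\sum_{i=0}^{n-1}q(i/n)$ of $\int_0^1 q$.

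I define $q(x) := \sum_{k\ge 1}\varepsilon_k \sin^2(\pi n_k x)$, with parameters chosen as follows. First pick positive integers $n_1<n_2<\cdots$ satisfying (a) $n_k\mid n_{k+1}$ for every $k$, and (b) $\gd_{n_k}\le 2^{-k^2}$; this is possible because $\gd_n\to 0$, so $n_k$ may be chosen inductively as $n_{k-1}$ times a sufficiently large integer. Then take $\varepsilon_k:=2^{-k}$, which makes $\sum_k\varepsilon_k<\infty$, so that the series converges uniformly and $q$ is continuous and nonnegative, while $\varepsilon_k/\gd_{n_k}\ge 2^{k^2-k}\to\infty$. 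The key computation is the elementary geometric-sum identity
\[
\frac{1}{N}\sum_{i=0}^{N-1}\sin^2(\pi m\, i/N) \;=\; \begin{cases}0, & \text{if } N\mid m,\\ 1/2, & \text{otherwise,}\end{cases}
\]
valid for all positive integers $N,m$, together with $\int_0^1\sin^2(\pi m x)\,dx=1/2$. Summing these against the weights $\varepsilon_k$, the scale-$N$ Darboux--Riemann discrepancy becomes $E_N = \tfrac12\sum_{k:\,N\mid n_k}\varepsilon_k$; taking $N=n_K$ and invoking (a) to conclude $n_K\mid n_k$ for every $k\ge K$ yields $E_{n_K}\ge\varepsilon_K/2$. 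Inserting this lower bound into the estimate $e^{-\|q\|_\infty}R_n(q)\le\sup_{\tau\ge 0}\|\cdots\|$ derived in the proof of Proposition \ref{prop:3.1} shows that the left-hand side of \eqref{eq:4.11} at $n=n_K$ is at least $c\,\varepsilon_K$ for some $c>0$; dividing by $\gd_{n_K}$ then gives $\limsup_{n}\sup_{\tau\ge 0}\|\cdots\|/\gd_n=\infty$, which is \eqref{eq:4.11}.

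The main obstacle is to design a \emph{single} continuous function $q$ whose Riemann-sum errors exhibit resonances at an infinite sequence of scales without those resonances destructively interfering across scales. The choice of $\sin^2$ together with the nested-divisibility frequencies $n_k\mid n_{k+1}$ is precisely what decouples the scales: the geometric-sum identity above turns every non-resonant contribution into an exact zero, so only the desired diagonal terms survive. The remaining tradeoff---requiring $\varepsilon_k$ summable (for continuity) but $\varepsilon_k\gg\gd_{n_k}$ (for the lower bound)---is routinely satisfiable because $\gd_n\to 0$ permits one to force $\gd_{n_k}$ to decay arbitrarily fast along the chosen subsequence.
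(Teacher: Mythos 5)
Your proposal is correct, and it reaches the conclusion by a genuinely different route in the one place where the paper's argument is non-elementary: the paper simply cites Theorem 6 of Walsh--Sewell \cite{WalsSewell1937} to obtain a continuous function whose equidistant Riemann sums for $\int_0^1$ converge slower than $\gd_n$, and then (exactly as you do) uses continuity to bound $\esssup_{(t,s)\in\gD}R_n(t,s;q)$ from below by the full-interval discrepancy and invokes Proposition \ref{prop:3.1}. You instead construct such a function explicitly as a lacunary series $q=\sum_k \e_k\sin^2(\pi n_k x)$, exploiting the exact identity that the left-endpoint Riemann sum of $\sin^2(\pi m\cdot)$ at scale $N$ equals $0$ when $N\mid m$ and $1/2$ otherwise, so the discrepancy at scale $N$ is $\tfrac12\sum_{k:\,N\mid n_k}\e_k$; choosing $n_k$ along a subsequence where $\gd_{n_k}\le 2^{-k^2}$ (possible since $\gd_n\to0$) and $\e_k=2^{-k}$ gives $R_{n_K}(q)\ge \e_K/2 \gg \gd_{n_K}$, hence $\limsup_n R_n(q)/\gd_n=\infty$, and the two-sided bound $e^{-\|q\|_\infty}R_n(q)\le \sup_{\gt\ge0}\|\cdot\|\le R_n(q)$ from the proof of Proposition \ref{prop:3.1} finishes the argument. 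All steps check out (the uniform convergence justifies interchanging sum, integral and Riemann sum; $q\ge0$ as required by Proposition \ref{prop:3.1}; and in fact the divisibility chain $n_k\mid n_{k+1}$ is not even needed for the lower bound, since every modal discrepancy is nonnegative and the diagonal term $N=n_K$ alone already gives $\e_K/2$). What the paper's route buys is brevity and a pointer to the classical sharp results on Riemann-sum approximation; what yours buys is a self-contained, fully explicit example and, incidentally, a nonnegative $q$ by construction, whereas the function supplied by Walsh--Sewell must in principle be shifted by a constant (which does not change the discrepancy) to meet the nonnegativity hypothesis of Proposition \ref{prop:3.1}.
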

\begin{proof}
Taking into account Theorem 6 of \cite{WalsSewell1937}, we find that for any sequence
$\{\gd_n\}_{n\in\dN}$, $\gd_n > 0$ satisfying $\lim_{n\to\infty}\gd_n = 0$
there exists a continuous function $f(\cdot): [0,2\pi] \longrightarrow \dR$ such that
\begin{equation*}
\left|\int^{2\pi}_0  f(x)\, dx - \frac{2\pi}{n}\sum^n_{k=1}f(2k\pi/n)\right| = \go(\gd_n) \ ,
\end{equation*}
as $n \to \infty$. Setting $q(y) := f(2\pi(1-y))$, $y \in [0,1]$, we get a continuous function
$q(\cdot): [0,1] \longrightarrow \dR$,
such that
\begin{equation*}
\left|\int^{1}_0 q(y)dy- \frac{1}{n}\sum^{n-1}_{k=0}q(k/n)\right| = \go(\gd_n) \ .
\end{equation*}
Because $q(\cdot)$ is continuous we find
\begin{equation*}
\esssup_{(t,s)\in \gD}\Big|\int^t_s q(y) \, dy - \frac{t-s}{n}\sum^{n-1}_{n=0}q(s + k\tfrac{t-s}{n})\Big|\\
\ge \Big|\int^{1}_0 q(y)\,dy- \frac{1}{n}\sum^{n-1}_{k=0}q(k/n)\Big| \ ,
\end{equation*}
which yields
\begin{equation*}
\esssup_{(t,s)\in \gD}\Big|\int^t_s q(y)\,dy - \frac{t-s}{n}\sum^{n-1}_{n=0}q(s + k\tfrac{t-s}{n})\Big| =
\go(\gd_n) \ .
\end{equation*}
Applying now Proposition \ref{prop:3.1} we prove \eqref{eq:4.11}.
\end{proof}

Our final comment concerns the case when $q$ is only \textit{measurable}. Then it can happen that the
Trotter product formula for that pair $\{D_0 , Q\}$ does not converge in the operator-norm topology.

\begin{thm}\label{thm:4.4}
 There is a non-negative function $q \in L^\infty([0,1])$ such that
 \be\label{eq:4.16}
 \limsup_{n\to\infty}\; \sup_{\gt \ge 0}\left\|e^{-\gt(D_0 + Q)} - \left(e^{-\gt D_0/n}
 e^{-\gt Q/n}\right)^n\right\|_{\cl B( L^p(\cl I))} > 0 \ .
 \ee
\end{thm}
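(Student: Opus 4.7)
By Proposition \ref{prop:3.1}, it suffices to exhibit a non-negative $q \in L^\infty([0,1])$ for which
\[
\limsup_{n\to\infty} R_n(q) > 0, \qquad R_n(q) := \esssup_{(t,s)\in \gD}\Big|\int_s^t q(y)\,dy - \frac{t-s}{n}\sum_{k=0}^{n-1} q\!\left(s + k\tfrac{t-s}{n}\right)\Big|,
\]
since the lower bound $e^{-\|q\|_{L^\infty}} R_n(q)$ on the Trotter error established in the proof of Proposition \ref{prop:3.1} then forces the $\limsup$ in \eqref{eq:4.16} to be strictly positive. So the whole problem becomes a question about Darboux--Riemann sums: we must arrange for such sums to fail to approximate the integral uniformly in $(t,s)$, for infinitely many $n$.

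The plan is to take $q = \chi_E$ for a carefully chosen measurable set $E \subset [0,1]$. Fix a rapidly growing sequence $n_j \to \infty$ (for instance $n_j = 4^j$) and a much faster decreasing sequence $\e_j > 0$ with $\sum_j n_j \e_j < \tfrac14$ (e.g.\ $\e_j = n_j^{-3}$). Place a narrow interval of radius $\e_j$ around each of the dyadic-type rationals $k/n_j$:
\[
E_j := \bigcup_{k=0}^{n_j-1}\!\bigl[\tfrac{k}{n_j}-\e_j,\,\tfrac{k}{n_j}+\e_j\bigr]\cap[0,1], \qquad E := \bigcup_{j\ge 1} E_j.
\]
Then $|E| \le \sum_j 2n_j\e_j < \tfrac12$, and $q := \chi_E$ is a non-negative element of $L^\infty([0,1])$ with $\|q\|_{L^\infty} = 1$.

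The key step is to localize near the corner $(t,s)=(1,0)$ of $\gD$. For any $j$ and any $(t,s)$ lying in the two-dimensional set
\[
\gS_j := \{(t,s)\in \gD : 0 \le s < \e_j/3,\; 1 - \e_j/3 < t \le 1\},
\]
the sample points $y_k := s + k(t-s)/n_j$ satisfy
\[
\bigl|y_k - k/n_j\bigr| = \bigl|s + k(t-s-1)/n_j\bigr| \le |s| + |t-1-s| \le 2|s| + |t-1| < \e_j,
\]
so that $y_k \in E_j \subset E$ and $q(y_k) = 1$ for every $k = 0,1,\dots,n_j-1$. The Riemann sum therefore equals $t-s$, while $\int_s^t q \le |E| < \tfrac12$, yielding
\[
R_{n_j}(q) \ge (t-s) - |E| \ge (1 - 2\e_j/3) - \tfrac12 \longrightarrow \tfrac12
\]
as $j \to \infty$, since $\gS_j$ has positive two-dimensional measure. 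Consequently $\limsup_{n\to\infty}R_n(q) \ge \tfrac12$, which combined with Proposition \ref{prop:3.1} gives \eqref{eq:4.16}.

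The main subtlety is the trade-off in the choice of $\e_j$: it must be small enough that $|E|$ stays strictly below $1$ (so the integral does not cancel the Riemann sum), yet the rectangle $\gS_j$ must still have positive two-dimensional measure so that the essential supremum actually sees the large deviation at scale $n = n_j$. The calculation above shows both are achieved simultaneously as long as $\sum_j n_j \e_j$ is small; everything else is routine.
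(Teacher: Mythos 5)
Your proof is correct and takes essentially the same approach as the paper's: build $q$ as a characteristic function from tiny intervals centered at the equidistant nodes $k/n$ along a sequence of $n$'s, localize $(t,s)$ near the corner $(1,0)$ of $\gD$ so that every Riemann node lands in those intervals, and conclude via the two-sided estimate behind Proposition \ref{prop:3.1}. The only difference is a mirror image: the paper takes $q=\chi_{\cC}$ with $\cC$ the complement of the union of such intervals (so the Riemann sums vanish while $\int q\ge \tfrac12$), whereas you take $q=\chi_E$ on the union itself (so the Riemann sums are $\approx 1$ while $\int q<\tfrac12$); both produce a persistent discrepancy of about $\tfrac12$.
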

\begin{proof}
Let us introduce the open intervals
\begin{equation*}
\begin{split}
 \gD_{0,n} &:= (0,\tfrac{1}{2^{2n+2}}),\\
 \gD_{k,n} &:= (t_{k,n} - \tfrac{1}{2^{2n +2}},t_{k,n} + \tfrac{1}{2^{2n +2}}), \quad k = 1,2,\ldots,2^n-1,\\
 \gD_{2^n,n}&:= (1-\tfrac{1}{2^{2n+2}},1),
\end{split}
\end{equation*}
$n \in \dN$, where
\begin{equation*}
t_{k,n} = \frac{k}{2^n}, \quad k = 0,\ldots, n,\quad n \in \dN.
\end{equation*}
Notice that $t_{0,n} = 0$ and $t_{2^n,n} = 1$. One easily checks that the intervals
$\gD_{k,n}$, $k = 0,\ldots,2^n$, are mutually disjoint. We introduce the open sets
\begin{equation*}
\cO_n = \bigcup^{2^n}_{k = 0} \gD_{k,n} \subseteq \cI, \quad n \in \dN.
\end{equation*}
and
\begin{equation*}
\cO = \bigcup_{n\in\dN}\cO_n \subseteq \cI.
\end{equation*}
Then it is clear that
\begin{equation*}
|\cO_n| = \frac{1}{2^{n+1}}, \quad n \in \dN,
\quad \mbox{and} \quad
|\cO| \le \frac{1}{2}.
\end{equation*}
Therefore, the Lebesgue measure of the closed set $\cC := \cI\setminus\cO \subseteq \cI$ can be
estimated by
\begin{equation*}
|\cC| \ge \frac{1}{2} \ .
\end{equation*}
Using the characteristic function $\chi_{\cC}(\cdot)$ of the set $\cC$ we define
\begin{equation*}
q(t) :=\chi_{\cC}(t), \quad t \in \cI \ .
\end{equation*}
The function $q(\cdot)$ is measurable and it satisfies $0 \le q(t) \le 1$, $t \in \cI$.

Let $\varepsilon \in (0,1)$. We choose $s \in (0,\varepsilon)$ and $t \in (1-\varepsilon,1)$ and  we set
\begin{equation*}
\xi_{k,n}(t,s) := s + k\,\frac{t-s}{2^n}, \quad k = 0,\ldots,2^n-1, \quad n \in \dN, \quad (t,s) \in \gD.
\end{equation*}
Note that $\xi_{k,n}(t,s) \in (0,1)$, $k = 0,\ldots,2^n-1$, $n \in \dN$. Moreover, we have
\begin{equation*}
t_{k,n} - \xi_{k,n}(t,s) =  k\frac{1}{2^n} -s - k\,\frac{t-s}{2^n} = k\frac{1-t+s}{2^n} -s \ ,
\end{equation*}
which leads to the estimate
\begin{equation*}
|t_{k,n} - \xi_{k,n}(t,s)| \le \varepsilon(\frac{k}{2^{n-1}} + 1), \quad k =0,\ldots,2^n-1, \quad n \in \dN \ .
\end{equation*}
Hence
\begin{equation*}
|t_{k,n} - \xi_{k,n}(t,s)| \le 3\varepsilon, \quad k =0,\ldots,2^n-1, \quad n \in \dN.
\end{equation*}
Let $\varepsilon_n := {1}/{(3\cdot 2^{2n+2})}$ for $n \in \dN$. Then we get that $\xi_{k,n}(t,s) \in \gD_{k,n}$
for $k = 0,\ldots,2^n-1$, $n \in \dN$, $s \in (0,\varepsilon_n)$ and for $t \in (1-\varepsilon_n,1)$.

Now let
\begin{equation*}
S_n(t,s;q) := \frac{t-s}{n}\sum^{n-1}_{k=0}q(s + k\tfrac{t-s}{n}), \quad n \in \dN, \quad (t,s) \in \gD \ .
\end{equation*}
We consider
\begin{equation*}
S_{2^n}(t,s;q) = \frac{t-s}{n}\sum^{2^n-1}_{k=0}q(s + k\tfrac{t-s}{2^n}) =
\frac{t-s}{n}\sum^{2^n-1}_{k=0}q(\xi_{k,n}(t,s)),
\end{equation*}
$n \in \dN$, $(t,s) \in \gD$. If $s \in (0,\varepsilon_n)$ and $t \in (1-\varepsilon_n,1)$,
then $S_{2^n}(t,s;q) = 0$, $n \in \dN$ and
\begin{equation*}
\left|\int^t_s q(y)\, dy - S_{2^n}(t,s;q)\right| = \int^t_s q(y) dy, \quad n \in \dN,
\end{equation*}
for $s \in (0,\varepsilon_n)$ and $t \in (1-\varepsilon_n,1)$. In particular, this yields
\bed
\esssup_{(t,s)\in \gD}\left|\int^t_s q(y) dy - S_{2^n}(t,s;q)\right| \ge \esssup_{(t,s)
\in\gD}\int^t_s q(y) dy \ge \int_{\cI} \chi_\cC(y)dy \ge \frac{1}{2} \ .
\eed
Hence, we obtain
\begin{equation*}
\limsup_{n\to\infty}\;\esssup_{(t,s)\in \gD}\left|\int^t_s q(y) dy - S_{2^n}(t,s;q)\right|  \ge \frac{1}{2},
\end{equation*}
and applying Proposition  \ref{prop:3.1} we finish the prove of \eqref{eq:4.16}.
\end{proof}

We note that Theorem \ref{thm:4.4} does not exclude the convergence of the Trotter product formula
for the pair $\{D_0 , Q\}$ in the \textit{strong} operator topology. Examples of this dichotomy are known
for the Trotter-Kato product formula in Hilbert spaces \cite{ITTZ2001}. By virtue of (\ref{eq:3.00}) 
and (\ref{eq:4.16}), Theorem \ref{thm:4.4} yields an example of this dichotomy in Banach spaces.
\subsection*{Acknowledgments}
The preparation of the paper was supported by the European Research Council via ERC-2010-AdG no 267802
(``Analysis of Multiscale Systems Driven by Functionals'').
V.A.Z. thanks WIAS for hospitality.


\begin{thebibliography}{1}

\bibitem{CachZag2001}
Vincent {Cachia} and Valentin~A. Zagrebnov.
\newblock {Operator-norm convergence of the Trotter product formula for
  holomorphic semigroups.}
\newblock {\em {J. Oper. Theory}}, 46(1):199--213, 2001.

\bibitem{Chern1974}
Paul R.Chernoff.
\newblock \textit{Product formulas, nonlinear semigroups, and addition of unbounded operators}.
\newblock Memoirs of the American Mathematical Society, No. 140. American Mathematical Society,
Providence, R. I., 1974.

\bibitem{ITTZ2001}
Takashi Ichinose, Hideo~Tamura, Hiroshi Tamura, and Valentin~A. Zagrebnov.
\newblock Note on the paper: ``{T}he norm convergence of the {T}rotter-{K}ato
  product formula with error bound'' by {T}. {I}chinose and {H}. {T}amura.
\newblock {\em Comm. Math. Phys.}, 221(3):499--510, 2001.

\bibitem{Kato1978}
Tosio {Kato}.
\newblock {Trotter's product formula for an arbitrary pair of self-adjoint
  contraction semigroups.}
\newblock {Topics in functional analysis, Essays dedic. M. G. Krein, Adv.
  Math., Suppl. Stud. 3, 185-195, 1978.}

\bibitem{Kato1980}
Tosio ~Kato.
\newblock {\em Perturbation theory for linear operators}.
\newblock Classics in Mathematics. Springer-Verlag, Berlin, 1995.

\bibitem{NeiStephanZag2016}
Hagen~{Neidhardt}, Artur~{Stephan}, and Valentin~A. {Zagrebnov}.
\newblock {Convergence rate estimates for the Trotter product approximations of
solution operators for non-autonomous Cauchy problems}.
\newblock {\em arXiv:1612.06147v1 [math.FA]}, December 2016.

\bibitem{Trotter1959}
Hale~F. Trotter.
\newblock On the product of semi-groups of operators.
\newblock {\em Proc. Amer. Math. Soc.}, 10:545--551, 1959.

\bibitem{WalsSewell1937}
J.~L. Walsh and W.~E. Sewell.
\newblock Note on degree of approximation to an integral by riemann sums.
\newblock {\em The American Mathematical Monthly}, 44(3):155--160, 1937.

\end{thebibliography}
\def\cprime{$'$}

\end{document}